\newcommand{\ggam}[2]{\Gamma_{#1}(#2)}
\DeclareSymbolFont{rsfscript}{OMS}{rsfs}{m}{n}
\DeclareSymbolFontAlphabet{\mathrsfs}{rsfscript}
\DeclareMathOperator{\dt}{.}
\DeclareMathOperator{\excl}{\mathrm{excl}}
\DeclareMathOperator{\dupl}{\mathrm{dupl}}
\DeclareMathOperator{\Cay}{\mathrm{Cay}}
\newcommand{\sa}{synchronizing automata}
\newcommand{\san}{synchronizing automaton}
\newcommand{\cra}{completely reachable automata}
\newcommand{\cran}{completely reachable automaton}
\newcommand{\scc}{strongly connected component}
\newcommand{\scn}{strongly connected}
\newcommand{\orb}{\mathrm{orb}} 
\newcommand{\mA}{\mathrsfs{A}}
\newcommand{\mB}{\mathrsfs{B}}
\newcommand{\mE}{\mathrsfs{E}}
\newcommand{\Z}{\mathbb{Z}}
\newcommand{\Zn}{\mathbb{Z}_n}
\newtheorem{theorem}{Theorem}
\newtheorem{proposition}[theorem]{Proposition}
\newtheorem{lemma}[theorem]{Lemma}
\theoremstyle{remark}
\newtheorem*{remark}{Remark}
\newtheorem{example}{Example}
\pgfplotsset{compat=1.16}
\begin{document}

\title[Don's conjecture for binary completely reachable automata]{Don's conjecture\\ for binary completely reachable automata:\\
an approach and its limitations}

\author[D. Casas]{David Casas}
\address{{\normalfont Institute of Natural Sciences and Mathematics, Ural Federal University, 620000 Ekaterinburg, Russia}}
\email{dafecato4@gmail.com}
\email{m.v.volkov@urfu.ru}

\author[M. V. Volkov]{Mikhail V. Volkov}

\thanks{The authors were supported by the Ministry of Science and Higher Education of the Russian Federation, project FEUZ-2023-0022.}

\begin{abstract}
A deterministic finite automaton in which every non-empty set of states occurs as the image of the whole state set under the action of a suitable input word is called completely reachable. It was conjectured that in each completely reachable automaton with $n$ states, every set of $k>0$ states is the image of a word of length at most $n(n-k)$. We confirm the conjecture for completely reachable automata with two input letters satisfying certain restrictions on the action of the letters.
\end{abstract}

\keywords{Deterministic finite automaton, Complete reachability, Don's conjecture, Rystsov's graph}

\maketitle

\section{Background and overview}
\label{sec:intro}

A \emph{complete deterministic finite automaton} (DFA) is a triple $\mathrsfs{A}=\langle Q,\Sigma,\delta\rangle$ where $Q$ and $\Sigma$ are finite non-empty sets and $\delta\colon Q\times\Sigma\to Q$ is an everywhere defined map called the \emph{transition function} of $\mA$.

The elements of $Q$ and $\Sigma$ are called \emph{states} and, resp., \emph{letters}. \emph{Words over $\Sigma$} are finite sequences of letters (including the empty sequence denoted by $\varepsilon$). The set of all words over $\Sigma$ is denoted by $\Sigma^*$. We define the \emph{length} of a word $w=a_1\cdots a_n$ with $a_i\in\Sigma$, $i=1,\dots,n$, as the number $n$ and assume that the length of $\varepsilon$ is 0.

The transition function $\delta$ extends to a function $Q\times\Sigma^*\to Q$ (still denoted by $\delta$) via the following recursion: for every $q\in Q$, we set $\delta(q,\varepsilon)=q$  and $\delta(q,wa)=\delta(\delta(q,w),a)$ for all $w\in\Sigma^*$ and $a\in\Sigma$. Thus, every word $w\in\Sigma^*$ induces the transformation $q\mapsto\delta(q,w)$ on the set $Q$. The set $T(\mA)$ of all transformations induced this way is closed under the composition of transformations and contains the identity transformation. Thus, $T(\mA)$ is a monoid called the \emph{transition monoid} of $\mA$.

Let  $\mathrsfs{A}=\langle Q,\Sigma,\delta\rangle$ be a DFA. The transition function $\delta$ can be further extended to non-empty subsets of the set $Q$. Namely, for every non-empty subset $P\subseteq Q$ and every word $w\in\Sigma^*$, we let $\delta(P,w)=\{\delta(q,w)\mid q\in P\}$.

We lighten the above notation by suppressing the sign of the transition function: we specify a DFA as a pair $\langle Q,\Sigma\rangle$ and write $q\dt w$ for $\delta(q,w)$ and $P\dt w$ for $\delta(P,w)$.

We say that a non-empty subset $P\subseteq Q$ is \emph{reachable} in $\mathrsfs{A}=\langle Q,\Sigma\rangle$ if $P=Q\dt w$ for some word $w\in\Sigma^*$. A DFA is called \emph{completely reachable} if every non-empty subset of its state set is reachable. Observe that complete reachability is actually a property of the transition monoid of $\mA$; hence, if a DFA $\mathrsfs{B}$ with the same state set as $\mA$ is such that $T(\mB)=T(\mA)$, then $\mB$ is completely reachable if and only if so is $\mA$.

Henk Don~\cite[Conjecture 18]{Don16} conjectured that if in a DFA with $n$ states, some subset $S$ with $k>0$ states is reachable, then $S$ is reachable by a word of length $\le n(n-k)$. In this form, the conjecture was far too bold as its validity would imply the collapse of polynomial hierarchy; see \cite[Section 2.3]{GoJu19} for a discussion. Moreover, Fran{\c{c}}ois Gonze and Rapha{\"e}l Jungers constructed a series of $n$-state automata with a~distinguished subset $S$ of size $\lfloor\frac{n}2\rfloor$ such that for each $n>6$, the length of the shortest word that reaches $S$ is greater than $\frac{2^n}{n}$; see \cite[Proposition 7]{GoJu19}. However, the restriction of the conjecture to \cra\ makes sense, and to the best of our knowledge, the question of whether this restriction holds remains open\footnote{See the note at the end of the paper added when the paper was revised.}. Robert Ferens and Marek Szyku\l{}a \cite{FeSz22} proved that in a completely reachable automaton with $n$ states, each subset with $k>0$ states is reachable by a word of length $\le 2n(n-k)$. Here we aim to prove Don's conjecture for \cra\ with two letters under certain restrictions on the letters' action.

DFAs with two letters are called \emph{binary}. Except for the 2-state flip-flop, a binary \cran\ is always \emph{circular}, i.e., one of the letters acts as a cyclic permutation of all states; see \cite[Lemma 1]{CaVo22}. Throughout, the letter acting as a cyclic permutation is denoted by $b$ while the other letter is denoted by $a$. Without any loss, we assume that all circular DFAs with $n$ states have the set $\Zn$ of all residues modulo $n$ as their state sets, and the letter $b$ acts on $\Zn$ by adding 1 modulo $n$, i.e., for each $q\in\Zn$, we have $q\dt b=q\oplus1$ where $\oplus$ stands for addition modulo $n$.

Given a DFA $\mathrsfs{A}=\langle Q,\Sigma\rangle$ and a word $w\in\Sigma^*$, we call the set $\excl(w):=Q{\setminus}Q\dt w$ the \emph{excluded set} of $w$ and the set $\dupl(w):=\{p \mid p=q_1\dt w=q_2\dt w\text{ for some } q_1\ne q_2\}$ the \emph{duplicate set} of $w$. For the complete reachability of a binary circular DFA $\langle\Zn,\{a,b\}\rangle$, it is necessary that the sets $\excl(a)$ and $\dupl(a)$ both are singletons. We may assume that $\excl(a)=\{0\}$ as it does not matter from where the cyclic count of the states starts. Further, the DFAs $\mA=\langle\Zn,\{a,b\}\rangle$ and $\mA_k=\langle \Zn,\{a_k,b\}\rangle$, where the letter $a_k$ acts in $\mA_k$ as the word $b^ka$ does in $\mA$, are easily seen to have the same transition monoid. Hence, $\mA$ and $\mA_k$ are or are not completely reachable at the same time. Choosing $k=q_1$ (or $k=q_2$) where $q_1\ne q_2$ are such that $q_1\dt a=q_2\dt a$, we get $\{0\dt a_k\}=\dupl(a)=\dupl(a_k)$ while $\excl(a_k)=\excl(a)=\{0\}$. Thus, passing to the DFA $\langle \Zn,\{a_k,b\}\rangle$ and writing $a$ instead of $a_k$, we get $\excl(a)=\{0\}$ and $\dupl(a)=\{0\dt a\}$. We call a circular binary DFA $\langle \Zn,\{a,b\}\rangle$ \emph{standardized} if the letter~$a$ satisfies these two assumptions.

Denote the state $0\dt a$ by $d$ and let $r$ stand for the state such that $r\ne0$ and $r\dt a=d$. In a {standardized} DFA, the letter $a$ acts as a permutation on the set $\{1,\dots,n-1\}$; see the proof of Proposition~3 in \cite{CaVo22}. Therefore, acting by a suitable power of the letter $a$ at the state $d$, one gets the state $r$, that is, $r=d\dt a^{\ell-1}$ for some positive integer $\ell$. Let $\ell$ be the least positive integer with this property, and for each $s=0,1,\dots,\ell-1$, let $d_s:=d\dt a^s$ so that $d_0=d$, $d_{\ell-1}=r$, and all states $d_0,d_1,\dots,d_{\ell-1}$ are distinct. We denote the set $\{d_0,d_1,\dots,d_{\ell-1}\}$ by $\orb(d)$ and call it the \emph{orbit} of the DFA $\langle\Zn,\{a,b\}\rangle$. The subgroup of $(\Zn,\oplus)$ generated by $\orb(d)$ is called the \emph{orbit subgroup} of the DFA.
%NEW
The binary DFA $\mE_6$ presented in Figure~\ref{fig:example concepts} illustrates the above concepts. In this particular case, $d_0 = 2$, $r = 5$, and $\orb(2) = \{2,5\}$ whence $\ell = 2$.

\begin{figure}[h]
\begin{center}
\begin{tikzpicture}
	\foreach \ev in {0,1,2,3,4,5}
	{
		\node[fill=white, circle, draw=blue, scale=1] at ($({60*\ev}:2cm)$) (\ev) {$\ev$};
	}
	
	\draw
	(0) edge[-latex, dashed] (1)
	(1) edge[-latex, dashed] (2)
	(2) edge[-latex, dashed] (3)
	(3) edge[-latex, dashed] (4)
	(4) edge[-latex, dashed] (5)
	(5) edge[-latex, dashed] (0)
	
	(0) edge[-latex] (2)
	(2) edge[-latex, bend right] (5)
	(5) edge[-latex] (2)
	;
\end{tikzpicture}
\end{center}
\caption{The DFA $\mE_{6}$. Solid and dashed edges show the action of $a$ and, resp., $b$; if $a$ fixes a state, the corresponding loop is omitted to improve readability.}
\label{fig:example concepts}
\end{figure}
%NEW

The main result of this note is the following:
\begin{theorem}
\label{thm:good orbit}
Every standardized DFA $\langle\Zn,\{a,b\}\rangle$ whose orbit subgroup coincides with the group $(\Zn,\oplus)$ fulfills Don's conjecture.
\end{theorem}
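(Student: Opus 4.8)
My plan is to work with excluded sets and to build a reaching word of a rigid shape, read off \emph{backwards} from the target. The first ingredient is an excluded-set calculus. Write $\pi$ for the permutation $a$ induces on $\{1,\dots,n-1\}$, so $\pi(r)=d$ and, iterating, $\pi^{s}(r)=d_{(s-1)\bmod\ell}$ for $s\ge1$. Using $\excl(a)=\{0\}$, $\dupl(a)=\{0\dt a\}$ and the fact that $a$ shrinks a subset $P$ exactly when $\{0,r\}\subseteq P$, one checks that $\excl(wb)=\excl(w)\oplus1$; that $\excl(wa)=\{0\}\cup\pi(\excl(w))$ when $\{0,r\}\cap\excl(w)=\varnothing$ --- the only move that enlarges $\excl$, and then by a single element, call it a \emph{collapse}; and that $\excl(wa)=\{0\}\cup\pi(\excl(w){\setminus}\{0\})$ when $0\in\excl(w)$ --- the effect of every letter $a$ immediately following a collapse. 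Iterating the last rule, $\excl(wa^{e})=\{0\}\cup\pi^{e}(\excl(w))$ as long as the first of the $e$ letters $a$ collapses.

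Now fix $S$ with $|S|=k$, set $E:=\Zn{\setminus}S$ and $j:=|E|=n-k$, and look for a word
\[
  w=a^{e_{1}}b^{c_{1}}a^{e_{2}}b^{c_{2}}\cdots a^{e_{j}}b^{c_{j}},\qquad e_{i}\ge1,\ 0\le c_{i}<n,
\]
with $\excl(w)=E$; since $|w|=\sum_{i=1}^{j}(e_{i}+c_{i})$, it suffices to keep $e_{i}+c_{i}\le n$ for each $i$. With $F_{0}:=\varnothing$ and $F_{i}:=\excl(a^{e_{1}}b^{c_{1}}\cdots a^{e_{i}}b^{c_{i}})$, the calculus gives the forward recursion $F_{i}=(\{0\}\cup\pi^{e_{i}}(F_{i-1}))\oplus c_{i}$, legitimate as long as $\{0,r\}\cap F_{i-1}=\varnothing$. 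Inverting it, $F_{i}$ has predecessor $F_{i-1}=\pi^{-e_{i}}((F_{i}\ominus c_{i}){\setminus}\{0\})$, which is bona fide (it avoids $\{0,r\}$ and has $|F_{i}|-1$ elements) precisely when $c_{i}\in F_{i}$ and $d_{(e_{i}-1)\bmod\ell}\oplus c_{i}\notin F_{i}$. Hence, starting from $F_{j}:=E$ and descending, one manufactures a word reaching $E$ as soon as such a pair $(e_{i},c_{i})$ is available at every stage (the stage $i=1$ being automatic since $|F_{1}|=1$). As $E$ is arbitrary, this would incidentally reprove that these automata are completely reachable.

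The hypothesis is what makes a legal move always available: if for some $F=F_{i}$ with $\varnothing\ne F\subsetneq\Zn$ no admissible pair existed, then $d_{s}\oplus c\in F$ for all $c\in F$ and all $s$, hence (translation being a bijection) $F\oplus d_{s}=F$ for every $s$, so $F$ is invariant under the subgroup generated by $\orb(d)$, i.e.\ under all of $(\Zn,\oplus)$; then $F\in\{\varnothing,\Zn\}$, a contradiction. (Equivalently, this is the connectivity of Rystsov's graph $\Cay(\Zn,\orb(d))$.) So the descent never stalls, and $E$ is reachable.

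It remains to keep the blocks short. Whenever $F_{i}$ is not a union of cosets of the cyclic subgroup $\langle d\rangle$, the value $s=0$ is admissible above, so one may take $e_{i}=1$ and then $e_{i}+c_{i}\le1+(n-1)=n$; if $\gcd(n,d)=1$ this handles every block. In general put $g:=\gcd(n,d)$. A set $F_{i}$ can be a union of $\langle d\rangle$-cosets only when $n/g$ divides $|F_{i}|=i$, and at such a block one is forced to take $e_{i}=s^{*}+1$, where $s^{*}$ is the least $s\ge1$ with $F_{i}\oplus d_{s}\ne F_{i}$ (it exists, by the hypothesis again); since $d_{0},\dots,d_{s^{*}-1}$ are then distinct nonzero elements of $\langle d\rangle$, $s^{*}\le n/g-1$, and since the admissible values of $c_{i}$ form a nonempty union of $\langle d\rangle$-cosets they include a representative $c_{i}\le g-1$, whence $e_{i}+c_{i}\le s^{*}+g\le n/g+g-1\le n$. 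Summing over the blocks gives $|w|\le nj=n(n-k)$, Don's bound. The step I expect to be the real obstacle is exactly this length accounting --- controlling the blocks at which a higher power of $a$ is unavoidable, that is, managing the discrepancy between $\langle d\rangle$ and the full orbit $\orb(d)$; once the excluded-set calculus and the backward construction are set up, the rest is routine bookkeeping.
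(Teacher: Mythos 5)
Your construction is sound in outline, and at the skeleton level it is the same strategy the paper uses: you reach $S$ by a product of $n-k$ defect-1 blocks $a^{e_i}b^{c_i}$, each of length at most $n$, which is exactly what Lemma~\ref{lem:n-extension} combined with Proposition~\ref{prop:noncosets} produces (there the expanding words are $a^{s+1}b^q$ with $q+s<n$). Your excluded-set calculus, the backward recursion with its admissibility criterion ($c_i\in F_i$ and $d_{(e_i-1)\bmod\ell}\oplus c_i\notin F_i$), and the argument that the hypothesis $H_0=\Zn$ prevents the descent from stalling are all correct. Where you genuinely diverge is in the step you yourself flag as the crux: keeping every block of length at most $n$. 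The paper handles this globally (Proposition~\ref{prop:restricted scc}: the short edges alone already connect each $H_0$-coset, proved via the gcd Lemma~\ref{lem:arithmetic} and an inductive spanning-subgraph construction), whereas you argue per set, using the freedom to choose the translate $c_i$ inside a coset.

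In that per-set accounting there is a genuine gap: the claim that $d_0,\dots,d_{s^*-1}$ lie in $\langle d\rangle$, and hence $s^*\le n/g-1$, does not follow from what you have. The relations $F_i\oplus d_s=F_i$ for $s<s^*$ only place these orbit elements in the setwise stabilizer $K:=\{t\in\Zn \mid F_i\oplus t=F_i\}$, which contains $\langle d\rangle$ but may be strictly larger. For instance, with $n=12$, $d_0=6$, $d_1=4$, $d_2=9$ (take $a$ acting as $0\mapsto6\mapsto4\mapsto9\mapsto6$ and fixing all other nonzero states, so $H_0=\Z_{12}$) and $F_i=2\Z_{12}$, one has $F_i\oplus d_1=F_i$ although $d_1\notin\langle d\rangle=\{0,6\}$, so $s^*=2>n/g-1=1$. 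Fortunately the repair is short: the $s^*$ distinct nonzero elements $d_0,\dots,d_{s^*-1}$ lie in $K$, so $s^*\le|K|-1$; the admissible values of $c_i$ form a nonempty union of $K$-cosets (both $F_i$ and its complement are $K$-invariant), so one may take $c_i\le n/|K|-1$; and since $F_i$ is a proper nonempty subset while $d\in K\setminus\{0\}$, the subgroup $K$ is nontrivial and proper, giving $e_i+c_i\le|K|+n/|K|-1\le n$ (alternatively, keep your bound $c_i\le g-1$ and use $|K|\le n/2$, $g\le n/2$). With this correction the per-block bound, hence the bound $n(n-k)$, goes through, and you obtain a valid proof of Theorem~\ref{thm:good orbit} in which a local stabilizer count replaces the paper's restricted-orbit-digraph analysis.
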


Theorem~\ref{thm:good orbit} generalizes Don's result in \cite[Proposition 15]{Don16} that establishes the conjecture provided the numbers $d$ and $n$ are coprime, that is, already the subgroup generated by $d$ alone coincides with $(\Zn,\oplus)$.

The note is organized as follows. In Section~\ref{sec:extension}, we present a sufficient condition for a general DFA $\mathrsfs{A}=\langle Q,\Sigma\rangle$ to satisfy Don's conjecture; this condition is stated in terms of a property of subsets of $Q$ called expandability. The core of the note is Section~\ref{sec:orbit digraph} where we show that in every standardized DFA $\langle\Zn,\{a,b\}\rangle$, all non-empty subsets are $n$-expandable except for, perhaps, unions of cosets of the orbit subgroup of the DFA. This readily implies Theorem~\ref{thm:good orbit}. In Section~\ref{sec:limitations}, we analyze our approach and present examples demonstrating its limitations. In Section~\ref{sec:conclusion}, we summarize our results and discuss their similarity to certain facts concerning \sa.

We assume the reader's acquaintance with basic concepts concerning directed graphs (digraphs), such as strong connectivity, (directed) cycle and spanning subgraph, and the notion of a coset of a subgroup in a group.

\section{Expandable subsets}
\label{sec:extension}
The two results collected in this section are basically known as their versions have been scattered over the literature; see, e.g., \cite{Don16,FeSz22,BCV22}. We believe it is more convenient for the reader to see direct arguments rather than follow references to various sources where similar ideas might have appeared under different terminology and notation. Therefore, we have included complete proofs without claiming any originality.

Let $\mathrsfs{A}=\langle Q,\Sigma\rangle$ be a DFA. We say that a word $w\in\Sigma^*$ \emph{expands} a proper non-empty subset $S\subset Q$ if there exists a subset $P\subseteq Q$ such that $|P|>|S|$ and $P\dt w=S$. The following easy observation connects this notion with the concepts of excluded and duplicated sets introduced in Section~\ref{sec:intro}.

\begin{lemma}\label{lem:extension}
Let $\mathrsfs{A}=\langle Q,\Sigma\rangle$ be a DFA. A word $w\in\Sigma^*$ expands a proper non-empty subset $S\subset Q$ if and only if $\excl(w)\cap S=\varnothing$ while $\dupl(w)\cap S\ne\varnothing$.
\end{lemma}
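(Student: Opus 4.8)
The plan is to prove the two implications separately, using in each direction the full $w$-preimage of $S$ as the auxiliary set $P$.

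First I would prove the ``only if'' direction. Assume $w$ expands $S$, so there is $P\subseteq Q$ with $|P|>|S|$ and $P\dt w=S$. Then $S=P\dt w\subseteq Q\dt w$, so $S$ is disjoint from $\excl(w)=Q{\setminus}Q\dt w$. For the duplicate set, note that the map $q\mapsto q\dt w$ restricted to $P$ is a surjection onto $S$ with $|P|>|S|$, hence not injective; thus some $p\in S$ has two distinct preimages $q_1\ne q_2$ inside $P$, and a fortiori in $Q$, so $p\in\dupl(w)\cap S$.

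For the ``if'' direction, assume $\excl(w)\cap S=\varnothing$ and $\dupl(w)\cap S\ne\varnothing$. Take $P:=\{q\in Q\mid q\dt w\in S\}$, the full preimage of $S$. Clearly $P\dt w\subseteq S$; and since $S\cap\excl(w)=\varnothing$, every state of $S$ lies in $Q\dt w$, hence has a preimage in $Q$, which lies in $P$ — so in fact $P\dt w=S$, and consequently $|P|\ge|S|$. Finally, pick $p\in\dupl(w)\cap S$; by definition $p$ has two distinct preimages $q_1\ne q_2$, and both belong to $P$ because $q_i\dt w=p\in S$. This forces $|P|>|S|$, so $w$ expands $S$.

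I do not expect any serious obstacle here: the only point requiring a little care is checking that $P\dt w$ equals $S$ rather than merely being contained in it, which is exactly where the hypothesis $\excl(w)\cap S=\varnothing$ is used; everything else is a direct counting argument about fibres of the transformation induced by $w$.
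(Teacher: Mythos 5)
Your proof is correct and follows essentially the same route as the paper: the same pigeonhole argument for the ``only if'' direction, and the same choice of $P$ as the full $w$-preimage of $S$ (i.e., the union of the fibres $sw^{-1}$, $s\in S$) with the fibre-counting argument for the ``if'' direction. No gaps; the compressed step ``this forces $|P|>|S|$'' is exactly the paper's observation that the fibres are disjoint, non-empty, and at least one is non-singleton.
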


\begin{proof}
For the `only if' part, let $P$ be a subset of $Q$ with $|P|>|S|$ and $P\dt w=S$. Since $S=P\dt w\subseteq Q\dt w$, we get $(Q{\setminus}Q\dt w)\cap S=\varnothing$, that is, $\excl(w)\cap S=\varnothing$. Since $|P|>|P\dt w|$, there exist some $p,p'\in P$ such that $p\ne p'$ but $p\dt w$ coincides with $p'\dt w$. Then $p\dt w\in\dupl(w)\cap S$ whence $\dupl(w)\cap S\ne\varnothing$.

Conversely, for the `if' part, let $w\in\Sigma^*$ be a word with $\excl(w)\cap S=\varnothing$ and $\dupl(w)\cap S\ne\varnothing$. Since $\excl(w)=Q{\setminus}Q\dt w$ is disjoint from $S$, we have $S\subseteq Q\dt w$. Hence for every state $s\in S$, its preimage $sw^{-1}:=\{q\in Q\mid q\dt w=s\}$ is non-empty. Let $P:=\bigcup\limits_{s\in S} sw^{-1}$. Then $P\dt w=S$ and $|P|>|S|$ since the subsets $sw^{-1}$ are disjoint and for each $p\in\dupl(w)\cap S$, the set $pw^{-1}$ is non-singleton.
\end{proof}

Given a DFA $\mathrsfs{A}=\langle Q,\Sigma\rangle$ with $|Q|=n$, a proper non-empty subset of $Q$ is said to be $n$-\emph{expandable} if it can be expanded by a word of length at most $n$.

\begin{lemma}\label{lem:n-extension}
If in a DFA $\langle Q, \Sigma \rangle$ with $n$ states, every proper non-empty subset of $Q$ is $n$-expandable, then every subset with $k>0$ states is reachable by a word of length $\le n(n-k)$.
\end{lemma}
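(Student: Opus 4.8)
The plan is to reach an arbitrary non-empty subset $S\subseteq Q$ with $|S|=k$ by a descending induction on $k$, building up a word that reaches $S$ one expansion step at a time. The base case is $k=n$: here $S=Q=Q\dt\varepsilon$, so $S$ is reachable by the empty word, whose length $0$ is certainly $\le n(n-n)=0$. For the inductive step, suppose $0<k<n$ and that every subset of size $k+1$ (or, more cheaply, some particular superset of $S$ we can produce) is reachable by a word of length $\le n(n-k-1)$; I want to conclude that $S$ is reachable by a word of length $\le n(n-k)$.

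The key move is to apply the hypothesis that every proper non-empty subset is $n$-expandable to $S$ itself. Since $0<k<n$, the set $S$ is proper and non-empty, so there is a word $u$ with $|u|\le n$ and a subset $P\subseteq Q$ with $|P|>|S|=k$ and $P\dt u=S$. Now $P$ has size at least $k+1$; by the induction hypothesis $P$ is reachable by some word $v$ with $|v|\le n(n-|P|)\le n(n-(k+1))=n(n-k-1)$. Then $Q\dt(vu)=(Q\dt v)\dt u=P\dt u=S$, and
\[
|vu|=|v|+|u|\le n(n-k-1)+n=n(n-k),
\]
which closes the induction. Running the induction from $k=n$ down to any desired value gives the statement for all $k>0$.

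I do not expect a genuine obstacle here: the argument is a clean telescoping induction, and the only point that needs a word of care is making sure the size bookkeeping lines up — specifically, that one uses $|P|\ge k+1$ (so that the induction hypothesis applies with the right exponent) rather than merely $|P|>k$, and that the single expansion step contributes at most $n$ to the length, matching the increment $n(n-k)-n(n-k-1)=n$. One should also note explicitly that $P$ is itself a non-empty subset of $Q$, so the induction hypothesis is legitimately available for it, and that when $|P|=n$ the bound $n(n-|P|)=0$ is consistent with reaching $P=Q$ by $\varepsilon$. No properties of the DFA beyond $n$-expandability of proper non-empty subsets are needed.
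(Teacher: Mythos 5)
Your proposal is correct and is essentially the paper's own argument: a (descending) induction on the defect $n-k$, using $n$-expandability of $S$ to produce a larger preimage $P$, applying the induction hypothesis to $P$, and concatenating, with the length bookkeeping $n(n-k-1)+n=n(n-k)$. The only caveat is that the induction hypothesis must be taken in the strong form (for all subsets of size greater than $k$, not just size $k+1$), since $|P|$ may exceed $k+1$ — which your use of the bound $n(n-|P|)$ implicitly does, exactly as in the paper.
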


\begin{proof}
We prove that for any $k$ with $0<k\le n$, every subset $S\subseteq Q$ with $k$ states is reachable by a word of length $\le n(n-k)$ by induction on $n-k$.

If $n-k=0$, then $S=Q$ and the claim holds since $Q$ is reachable by the empty word whose length is 0.

Now let $n-k>0$ so that $S$ is a proper subset of $Q$. Then $S$ is $n$-expandable so that there exist a word $w\in\Sigma^*$ of length at most $n$ and a subset $P\subseteq Q$ such that $|P|>|S|$ and $P\dt w=S$. Since  $|P|>|S|=k$, we have $n-|P|<n-k$, and the induction assumption applies to the subset $P$. Hence, $P=Q\dt v$ for some word $v\in\Sigma^*$ of length $\le n(n-k-1)$. Then $S=P\dt w=(Q\dt v)\dt w=Q\dt vw$ and the length of the word $vw$ does not exceed $n(n-k-1)+n=n(n-k)$ as required.
\end{proof}

Lemmas \ref{lem:extension} and  \ref{lem:n-extension} imply that a DFA $\mathrsfs{A}=\langle Q,\Sigma\rangle$ satisfies Don's conjecture whenever for each proper non-empty subset $S\subset Q$, one can find a word $w$ of length at most $|Q|$ with $\excl(w)\cap S=\varnothing$ and $\dupl(w)\cap S\ne\varnothing$.

One caveat seems to be in order: our notion of expandability should not be confused with {extensibility}, a similar but different concept widely used in the theory of synchronizing automata. We discuss this in some detail in the concluding section.

\section{The restricted orbit digraph}
\label{sec:orbit digraph}

Recall that if $X$ is a subset in a group $G$, the (\emph{right}) \emph{Cayley digraph of $G$ with respect to $X$}, denoted $\Cay(G,X)$, has $G$ as its vertex set and $\{(g,gx)\mid g\in G,\ x\in X\}$ as its edge set. The following property of Cayley digraphs of finite groups is folklore\footnote{In fact, our definition is the semigroup version of the notion of a Cayley digraph, but this makes no difference since in a finite group, every subsemigroup is a subgroup.}.
\begin{lemma}
\label{lem:cayley}
Let $G$ be a finite group, $X$ a subset of $G$, and $H$ the subgroup of $G$ generated by $X$. The \scc{}s\ of the Cayley digraph\/ $\Cay(G,X)$ have the left cosets $gH$, $g\in G$, as their vertex sets, and each \scc\ is isomorphic to $\Cay(H,X)$.
\end{lemma}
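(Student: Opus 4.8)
The plan is to prove the folklore fact about strongly connected components of $\Cay(G,X)$ directly, using only the definitions of reachability in a digraph and the fact that in a finite group every subsemigroup generated by a set is already a subgroup. First I would observe that a vertex $g'$ is reachable from $g$ by a directed path in $\Cay(G,X)$ if and only if $g'=gx_1x_2\cdots x_m$ for some $m\ge 0$ and $x_i\in X$; that is, $g'\in g\langle X\rangle^{+}$, where $\langle X\rangle^{+}$ denotes the subsemigroup of $G$ generated by $X$. Since $G$ is finite, $\langle X\rangle^{+}=H$, the subgroup generated by $X$. Hence $g'$ is reachable from $g$ exactly when $g'\in gH$, i.e. exactly when $g'$ and $g$ lie in the same left coset of $H$.

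Next I would use this to pin down the strongly connected components. Two vertices $g,g'$ lie in the same strong component precisely when each is reachable from the other; by the previous paragraph this means $g'\in gH$ \emph{and} $g\in g'H$, and either condition alone already forces $gH=g'H$ because $H$ is a subgroup (cosets are equal or disjoint, and $g'\in gH\iff gH=g'H$). Consequently the strong components are exactly the left cosets $gH$, $g\in G$, as claimed. It is worth noting that this also shows each coset, with the induced edges, is strongly connected, so there are no "extra" components.

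Finally I would exhibit the isomorphism between the component on $gH$ and $\Cay(H,X)$. The map $\varphi_g\colon H\to gH$ given by $h\mapsto gh$ is a bijection (left translation by $g$ restricted to $H$). For each $x\in X$ and each $h\in H$, the edge $(h,hx)$ of $\Cay(H,X)$ is carried to the pair $(gh,ghx)=(gh,(gh)x)$, which is precisely the edge of $\Cay(G,X)$ out of $gh$ labeled $x$; since $gh,ghx\in gH$, this edge lies in the component on $gH$, and conversely every edge of that component arises this way. Thus $\varphi_g$ is a digraph isomorphism from $\Cay(H,X)$ onto the strong component on $gH$. None of the steps presents a genuine obstacle; the only point requiring a little care is the reduction "subsemigroup generated by $X$ equals subgroup generated by $X$", which is exactly the finiteness remark already flagged in the footnote, so I would simply invoke it.
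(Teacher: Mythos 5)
Your proof is correct, and it is precisely the standard folklore argument: the paper states Lemma~\ref{lem:cayley} without proof, only flagging in a footnote the one point you also single out, namely that in a finite group the subsemigroup generated by $X$ coincides with the subgroup $H$, after which mutual reachability reduces to lying in the same left coset and left translation $h\mapsto gh$ gives the digraph isomorphism with $\Cay(H,X)$. (The only degenerate case, $X=\varnothing$, where the subsemigroup is empty while $H$ is trivial, does not occur in the paper's application since $\orb(d)\ne\varnothing$, and even then the statement about singleton components still holds.)
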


The \emph{orbit digraph} $\mathcal{O}(\mA)$ of a standardized DFA $\mA=\langle \Zn,\{a,b\}\rangle$ is the Cayley digraph $\Cay(\Zn,\orb(d))$. Denote the orbit subgroup of $\mA$ by $H_0$. Thus, each edge of $\mathcal{O}(\mA)$ is of the form $q\to q\oplus d_s$, where $q\in\Zn$ and $d_s\in \orb(d)$, and the \scc{}s of $\mathcal{O}(\mA)$ have the cosets $ q\oplus H_0$,  $q\in\Zn$, as their vertex sets.

The edge $q\to q\oplus d_s$ is called \emph{long} if $q+s\ge n$ and \emph{short} otherwise. The \emph{restricted orbit digraph} $\mathcal{R}(\mA)$ is the spanning subgraph of the orbit digraph obtained by removing all long edges from the latter digraph. We aim to show that what was connected in the orbit digraph remains so in the restricted orbit digraph. For this, we need an elementary lemma.
%NEW
\begin{lemma}
\label{lem:arithmetic}
Let $n$ be an arbitrary positive integer and $d_0,d_1,\dots,d_{s-1}$ be $s \geq 1$ distinct positive integers all less than $n$. Then the greatest common divisor of $d_0,d_1,\dots,d_{s-1},n$ does not exceed $\frac{n}{s+1}$.
\end{lemma}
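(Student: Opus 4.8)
The plan is to work directly with $g:=\gcd(d_0,d_1,\dots,d_{s-1},n)$ and count multiples of $g$. First I would observe that since $g$ divides $n$, the positive multiples of $g$ that are strictly less than $n$ are exactly $g,2g,\dots,\bigl(\tfrac{n}{g}-1\bigr)g$, so there are precisely $\tfrac{n}{g}-1$ of them.

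Next, each $d_i$ is by definition a multiple of $g$, and by hypothesis $0<d_i<n$; hence every $d_i$ belongs to the set of $\tfrac{n}{g}-1$ multiples described above. Since the $d_i$ are pairwise distinct and there are $s$ of them, this forces $s\le\tfrac{n}{g}-1$, i.e.\ $s+1\le\tfrac{n}{g}$. Rearranging gives $g\le\tfrac{n}{s+1}$, which is exactly the claim.

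I do not expect any genuine obstacle here: the argument is a one-step counting observation. The only point requiring a little care is the divisibility bookkeeping — namely that $g\mid n$ is what makes the count of multiples of $g$ below $n$ come out to the clean value $\tfrac{n}{g}-1$ (rather than $\lfloor\tfrac{n-1}{g}\rfloor$), and this is precisely why $n$ is included in the gcd. I would state that observation explicitly and then finish with the inequality.
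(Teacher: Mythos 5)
Your proof is correct and is essentially the paper's argument in a slightly different dress: the paper sorts the $d_i$ and bounds the $k$-th smallest from below by $kg$, concluding $n\ge(s+1)g$, whereas you count the $\tfrac{n}{g}-1$ positive multiples of $g$ lying strictly below $n$ and invoke distinctness. Both rest on the same single fact that $g$ divides each $d_i$ as well as $n$, so nothing further is needed.
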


\begin{proof}
Let $c_0,c_1,\dots,c_{s-1}$ be the numbers $d_0,d_1,\dots,d_{s-1}$ arranged in the ascending order. Denoting the greatest common divisor of $d_0,d_1,\dots,d_{s-1},n$ by~$g$, we get $c_0\ge g$, $c_1\ge 2g$, \dots, $c_{s-1}\ge sg$, and finally, $n\ge(s+1)g$ since each of the numbers $c_0,c_1,\dots,c_{s-1},n$ is a multiple of $g$ and all these numbers are distinct. Hence, $g\le\frac{n}{s+1}$.
\end{proof}
%NEW
\begin{proposition}
\label{prop:restricted scc}
Let  $\mA=\langle \Zn,\{a,b\}\rangle$ be a standardized DFA, and $H_0$ the subgroup of $(\Zn,\oplus)$ generated by $\orb(d)$. The \scc{}s of the restricted orbit digraph $\mathcal{R}(\mA)$ have the cosets of the subgroup $H_0$ as their vertex sets.
\end{proposition}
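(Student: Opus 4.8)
The plan is to prove Proposition~\ref{prop:restricted scc} by induction on $\ell=|\orb(d)|$, after recasting it in a purely arithmetic form that forgets the automaton: \emph{for any $t\ge1$ pairwise distinct integers $c_0,\dots,c_{t-1}$ in $\{1,\dots,n-1\}$, the digraph $R$ with vertex set $\Zn$ and an edge $q\to q\oplus c_s$ exactly when $q+s<n$ (identifying $q$ with the corresponding element of $\{0,\dots,n-1\}$ and letting $s$ run over $\{0,\dots,t-1\}$) has the cosets of $H:=\gentd{c_0,\dots,c_{t-1}}$ as the vertex sets of its \scc{}s.} Proposition~\ref{prop:restricted scc} is the instance $\{c_0,\dots,c_{t-1}\}=\orb(d)$, $t=\ell$, since $\mathcal{R}(\mA)$ is precisely this digraph and $H_0=\gentd{\orb(d)}$. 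One inclusion needs no induction: as each $c_s$ lies in $H$, every edge of $R$ joins two vertices of a single coset of $H$, so no walk in $R$ leaves a coset, whence each \scc\ of $R$ is contained in a coset of $H$. It therefore remains to show that every coset of $H$ is \scn\ in $R$; for $t=1$ this is clear, because then $R=\Cay(\Zn,\{c_0\})$ is a disjoint union of directed cycles whose vertex sets are the cosets of $\gentd{c_0}$.

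For the inductive step I would delete the last generator. Let $R'$ be the spanning subdigraph of $R$ retaining only the edges labelled $c_0,\dots,c_{t-2}$, under the same rule $q+s<n$. By the inductive hypothesis the \scc{}s of $R'$ are the cosets of $H':=\gentd{c_0,\dots,c_{t-2}}$, and, as above, $R'$-edges never leave a coset of $H'$. Now fix a coset $C$ of $H$ and write it as the disjoint union of the cosets $C_0,\dots,C_{p-1}$ of $H'$ contained in it, where $p=[H:H']$; the group $H/H'$ is cyclic, generated by $c_{t-1}+H'$, so after relabelling $C_{j+1}=C_j\oplus c_{t-1}$ with indices taken modulo $p$. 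Each $C_j$ is a \scc\ of $R'$, hence \scn\ also as a subdigraph of $R$, and the only edges of $R$ between two distinct $C_j$'s are the $c_{t-1}$-edges, each running from some $C_j$ to $C_{j+1}$. Consequently, contracting each $C_j$ to a point makes $R$ restricted to $C$ a subdigraph of the directed cycle $C_0\to C_1\to\cdots\to C_{p-1}\to C_0$ (possibly with loops), and, as every $C_j$ is internally \scn, the restriction of $R$ to $C$ is \scn\ as soon as all $p$ arcs $C_j\to C_{j+1}$ are present. An arc $C_j\to C_{j+1}$ is present once $C_j$ contains a vertex $q$ with $q+(t-1)<n$, that is, with $q\le n-t$ as an element of $\{0,\dots,n-1\}$.

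This last point is the only genuine obstacle, and it is exactly what Lemma~\ref{lem:arithmetic} supplies. Since $H'=g'\Zn$ with $g'=\gcd(c_0,\dots,c_{t-2},n)$, every coset of $H'$ meets $\{0,\dots,g'-1\}$, so it is enough to check $g'-1\le n-t$. In the step $t\ge2$, and Lemma~\ref{lem:arithmetic} applied to the $t-1$ distinct positive integers $c_0,\dots,c_{t-2}$, all less than $n$, gives $g'\le\frac nt$; and $\frac nt\le n-t+1$ because $n-t(n-t+1)=-(t-1)(n-t)\le0$. Hence $g'\le n-t+1$, i.e.\ $g'-1\le n-t$, which closes the step and with it the induction. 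The few remaining points — that $\mathcal{R}(\mA)$ really coincides with the digraph $R$, the cyclic structure of $H/H'$ together with the exact list of cosets $C_j$, and the standard fact that a digraph whose parts $C_j$ are each \scn\ and whose contraction is \scn\ is itself \scn\ — are routine and would need only a short verification.
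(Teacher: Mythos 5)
Your proof is correct and follows essentially the same route as the paper: an induction along the orbit elements $d_0,\dots,d_{\ell-1}$ in which Lemma~\ref{lem:arithmetic} supplies exactly the inequality $g_{s-1}+s\le n$ guaranteeing a short edge labelled by the new generator out of each coset of the previously generated subgroup, these edges then cyclically linking the $H'$-cosets inside each $H$-coset. The only difference is packaging — you delete the last generator and apply the induction hypothesis to the full restricted digraph (also making the ``each \scc{} lies in a coset'' direction explicit), whereas the paper builds up explicit spanning subgraphs $\Gamma^{(s)}$ — but the key lemma, the key inequality, and the coset-cycling argument coincide.
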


\begin{proof}
For a state $p\in\Zn$, denote by $\langle p\rangle$ the subgroup of $(\Zn,\oplus)$ generated by $p$. Let $\orb(d)=\{d_0,d_1,\dots,d_{\ell-1}\}$, and for each $s=0,1,\dots,\ell-1$, let $g_s$ stand for the greatest common divisor of the numbers $d_0,d_1,\dots,d_s,n$. We then have that $g_0$ is a multiple of $g_1$, which is a multiple of $g_2$, and so on.

Inducting on $s$, we will establish the following:
\begin{list}{\textbf{Claim:}}{}
\item \emph{For each $s=0,1,\dots,\ell-1$, the restricted orbit digraph $\mathcal{R}(\mA)$ has a spanning subgraph $\Gamma^{(s)}$ whose \scc{}s have the cosets of the subgroup $\langle g_s\rangle$ as their vertex sets.}
\end{list}

\noindent\emph{Proof of the claim.} For $s=0$, consider the following $n$ edges in $\mathcal{O}(\mathrsfs{A})$:
\[
0\to d_0,\ 1\to d_0\oplus1,\ \dots,\ n-1\to d_0\oplus(n-1).
\]
They all are short and easily seen to form $g_0$ directed cycles whose vertex sets are the cosets of the subgroup $\langle g_0\rangle$. Thus, the spanning subgraph with these $n$ edges can be taken as $\Gamma^{(0)}$.

Now let $s>0$. Construct a new spanning subgraph $\Gamma$ of  the graph $\mathcal{O}(\mathrsfs{A})$ by adding to the digraph $\Gamma^{(s-1)}$ the following $g_{s-1}$ edges:
\[
0\to d_s,\ \ 1\to d_s\oplus1,\ \dots,\ g_{s-1}-1\to d_s\oplus (g_{s-1}-1).
\]
First, we verify that each of these edges is short. For this, it suffices to show that $s+(g_{s-1}-1)<n$, that is, $g_{s-1}+s\le n$. Here we make use of Lemma~\ref{lem:arithmetic} to deduce the required inequality $g_{s-1}+s\le n$. Indeed,
\begin{align*}
n-g_{s-1}-s&\ge n-\frac{n}{s+1}-s&&\text{by Lemma~\ref{lem:arithmetic}}\\
           &=n\left(1-\frac1{s+1}\right)-s&&\\
           &\ge(s+1)\left(1-\frac1{s+1}\right)-s&&\text{since $n>\ell\ge s$}\\
           &=(s+1)-1-s=0.&&
\end{align*}

Back to the construction of the spanning subgraph $\Gamma$, we have to analyze its \scc{}s.

It readily follows from the definition of the greatest common divisor that $g_s$ is the greatest common divisor of $g_{s-1}$ and $d_s$. Hence $g_{s-1}=mg_s$ and $d_s=kg_s$ for some coprime $m$ and $k$. The subgroup $\langle g_s\rangle$ is equal to the union of the $m$ cosets of the subgroup $\langle g_{s-1}\rangle$ that are contained in $\langle g_s\rangle$; these $m$ cosets are
\begin{equation}\label{eq:cosets}
\langle g_{s-1}\rangle,\ \ g_s\oplus\langle g_{s-1}\rangle,\ \dots,\ (m-1)g_s\oplus\langle g_{s-1}\rangle.
\end{equation}

We have $d_s\in\overline{k}g_s\oplus\langle g_{s-1}\rangle$, where $\overline{k}$ is the residue of $k$ modulo $m$. Therefore, in the subgraph $\Gamma$, the newly added edge $0\to d_s$ connects the \scc{}s $\langle g_{s-1}\rangle$ and $\overline{k}g_s\oplus\langle g_{s-1}\rangle$ of the subgraph $\Gamma^{(s-1)}$. In the same way, the edge $\overline{k}g_s\to d_s\oplus\overline{k}g_s$ connects the \scc{}s $\overline{k}g_s\oplus\langle g_{s-1}\rangle$ and $\overline{2k}\oplus\langle g_{s-1}\rangle$, where $\overline{2k}$ is the residue of $2k$ modulo $m$, etc. Since $m$ and $k$ are coprime, the $m$ edges
\[
0\to d_s,\ \overline{k}g_s\to d_s\oplus\overline{k}g_s,\ \overline{2k}g_s\to d_s\oplus\overline{2k}g_s,\ \dots,\ \overline{(m-1)k}g_s\to d_s\oplus \overline{(m-1)k}g_s
\]
cyclically connect all $m$ cosets in~\eqref{eq:cosets}. By the induction assumption, each of these cosets is the vertex set of a \scc\ of the digraph $\Gamma^{(s-1)}$. Hence, all states in the subgroup $\langle g_s\rangle$ are mutually reachable in the digraph $\Gamma$.

In the same way, for each $i=1,\dots,g_s-1$, the $m$ edges
\begin{multline*}
i\to d_s\oplus i,\ \overline{k}g_s\oplus i\to d_s\oplus \overline{k}g_s\oplus i,\ \overline{2k}g_s\oplus i\to d_s\oplus \overline{2k}g_s\oplus i,\ \dots,\\
\overline{(m-1)k}g_s\oplus i\to d_s\oplus \overline{(m-1)k}g_s\oplus i
 \end{multline*}
cyclically connect the $m$ cosets
\[
i\oplus \langle g_{s-1}\rangle,\ i\oplus g_s\oplus \langle g_{s-1}\rangle,\ \dots,\ i\oplus (m-1)g_s\oplus \langle g_{s-1}\rangle.
\]
As above, using the induction assumption, we conclude that all states in the coset $i+\langle g_s\rangle$ also are mutually reachable in the digraph $\Gamma$. Since no more edges were added when constructing the graph $\Gamma$, the $g_s$ cosets $i+\langle g_s\rangle$ with $i=0,1,\dots,g_s-1$ form the vertex sets of the \scc{}s of $\Gamma$.

We have verified that the spanning subgraph $\Gamma$ fulfils all requirements we need, and thus, can be taken as $\Gamma^{(s)}$. This completes the proof of the inductive step, and hence, the proof of the claim.\qed

\begin{remark}
It may happen that $g_{s-1}$ divides $d_s$, in which case $g_{s-1}=g_s$. In this situation the above construction of the spanning subgraph $\Gamma^{(s)}$ still works fine (with $m=1$), because each newly added edge connects vertices within a  \scc{} of $\Gamma^{(s-1)}$. Thus, while having more edges, the graph $\Gamma^{(s)}$ has the same \scc{}s as $\Gamma^{(s-1)}$.
\end{remark}

The proof of Proposition~\ref{prop:restricted scc} is now immediate since the subgroup $\langle g_{\ell-1}\rangle$ coincides with the subgroup $H_0$.
\end{proof}
%New

We will use the following notation. If $m$ is an arbitrary element of the cyclic group $\Z_n$, the subgroup of $\Z_n$ generated by $m$, i.e., $\{0, m, 2m, \dots \}$, is denoted by $m\Z_n$. For an element $q\in\Z_n$, the coset of $m\Z_n$ that contains $q$ is denoted by $q \oplus m\Z_n$.
%New

For an illustration, we trace the inductive construction in the proof of Proposition~\ref{prop:restricted scc} on the 48-state automaton $\mE_{48}=\langle\Z_{48},\{a,b\}\rangle$ shown in Figure~\ref{fig:e48} below. In $\mE_{48}$, we have $d=24$, and the orbit of $\mE_{48}$ consists of $d=d_0$ and $r=d_1=18$ so that $\ell=2$.

\begin{figure}[htb]
\begin{center}
  \begin{tikzpicture}
  [scale=0.95]
         \pgfmathsetmacro{\n}{24}
        \foreach \t [evaluate=\t as \teval using int(2*\t)] in {1,...,23} {
        \edef\temp{\noexpand
        \node[fill=white, circle, draw=blue, scale=1] (\teval) at ( {4*cos((360*\t)/\n)}, {4*sin((360*\t)/\n)} )  {\teval};
         }\temp}
         \foreach \t [evaluate=\t as \teval using int(2*\t-1)] in {1,...,\n} {
         \edef\temp{\noexpand
         \node[fill=white, circle, draw=blue, scale=1] (\teval) at ( {5.5*cos((360*\t)/\n-7.5)}, {5.5*sin((360*\t)/\n-7.5)} ) {\teval};
         }\temp}
         \node[fill=white, circle, draw=blue, scale=1] (0) at ( {4}, {0} ) {$0$};
		 \foreach \t [evaluate=\t as \teval using int(\t + 1)]  in {0,1,...,46}{
		 	\draw
		 	(\t) edge[-latex, densely dotted] (\teval);
         }
         \draw
         	(47) edge[-latex, densely dotted] (0)
         	(0) edge[-latex] (24)
         	(18) edge[-latex, bend left = 40] (24)
         	(24) edge[-latex, bend right = 15] (18)
         	(14) edge[-latex, bend right=20] (13)
         	(13) edge[-latex, bend right=20] (14)
            (30) edge[-latex, bend right=40] (32)
         	(32) edge[-latex, bend right=40] (30);
         \end{tikzpicture}
\end{center}
\caption{The DFA $\mE_{48}=\langle\Z_{48},\{a,b\}\rangle$. Solid and dotted edges show the action of $a$ and, resp., $b$; if $a$ fixes a state, the corresponding loop is omitted to improve readability.}\label{fig:e48}
\end{figure}

We have $g_0=d_0=24$. Therefore, we begin the construction with the spanning subgraph $\Gamma^{(0)}$ that consists of the 24 directed cycles

\smallskip

\begin{center}
\begin{tikzpicture}
[scale=0.9]
	\node[fill=white, circle, draw=blue, minimum size=1.1cm] (0) {$i$};
	\node[fill=white, circle, draw=blue, right of = 0, xshift= 2cm] (1) {\small$i\oplus 24$};
	\draw
		(0) edge[-latex, bend left] (1)
		(1) edge[-latex, bend left] (0);
\end{tikzpicture}
\end{center}

\smallskip

\noindent with $i=0,1,\dots,23$, having the 24 cosets of the 2-element subgroup $24\Z_{48}=\{0,24\}$ as the vertex sets. All edges in $\Gamma^{(0)}$ are short.

To get the next spanning subgraph $\Gamma^{(1)}$, we add to $\Gamma^{(0)}$ the following 24 edges:
\[
0\to 18,\ 1\to 19,\ 2\to 20,\ \dots,\ 23\to 41.
\]
which are all short. The greatest common divisor $g_1$ of the numbers $d_0=24$, $d_1=18$ and $48$ is 6. We have $m=\dfrac{g_0}{g_1}=4$ and $k=\dfrac{d_1}{g_1}=3$. The 8-element subgroup $\langle g_s\rangle=6\Z_{48}=\{0,6,12,18,24,30,36,42\}$ is the union of the following four cosets of the group $(24\Z_{48},\oplus)$:
\[
24\Z_{48},\ \ 6\oplus24\Z_{48},\ \ 12\oplus24\Z_{48},\ \ 18\oplus24\Z_{48}.
\]
The newly added edges $0\to 18$, $18\to 36$, $6\to 24$, $12\to 30$ cyclically connect these four cosets, producing a \scc{} of $\Gamma^{(1)}$ as shown in Figure~\ref{fig:scc8}. The three other  \scc{}s of the digraph $\Gamma^{(1)}$ are constructed in the same way.
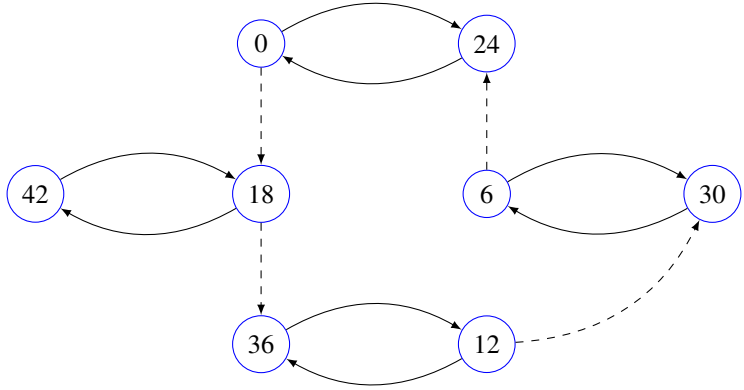
\begin{figure}[htb]
\begin{center}
  \begin{tikzpicture}
  [scale=0.9]
\node[fill=white, circle, draw=blue] (0) {0};
\node[fill=white, circle, draw=blue, right of = 0, xshift= 2cm] (24) {24};
\node[fill=white, circle, draw=blue, below of = 24, yshift=-1cm] (6) {6};
\node[fill=white, circle, draw=blue, below of = 0, yshift=-1cm] (18) {18};
\node[fill=white, circle, draw=blue, right of = 6, xshift= 2cm] (30) {30};
\node[fill=white, circle, draw=blue, below of = 6, yshift=-1cm] (12) {12};
\node[fill=white, circle, draw=blue, left of = 12, xshift= -2cm] (36) {36};
\node[fill=white, circle, draw=blue, left of = 18, xshift= -2cm] (42) {42};
	\draw
		(0) edge[-latex, bend left] (24)
		(24) edge[-latex, bend left] (0)
        (30) edge[-latex, bend left] (6)
		(6) edge[-latex, bend left] (30)
        (36) edge[-latex, bend left] (12)
		(12) edge[-latex, bend left] (36)
        (42) edge[-latex, bend left] (18)
		(18) edge[-latex, bend left] (42)
        (6) edge[-latex, dashed] (24)
        (0) edge[-latex, dashed] (18)
        (18) edge[-latex, dashed] (36)
        (12) edge[-latex, bend right, dashed] (30);
  \end{tikzpicture}
\end{center}
\caption{One of the \scc{}s of the digraph $\Gamma^{(1)}$ constructed for the DFA $\mE_{48}$ from Figure~\ref{fig:e48}. The solid edges are inherited from $\Gamma^{(0)}$; the dashed edges are newly added.}\label{fig:scc8}
\end{figure}

As an application of Proposition~\ref{prop:restricted scc}, we infer that certain proper non-empty subsets in standardized DFAs are $n$-expandable.
\begin{proposition}\label{prop:noncosets}
Let $\mA=\langle\Zn,\{a,b\}\rangle$ be a standardized DFA and $H_0$ its orbit subgroup. Every non-empty subset of $\Zn$ which is not a union of $H_0$-cosets is $n$-expandable.
\end{proposition}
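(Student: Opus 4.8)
The plan is to extract an $n$-expandable witness directly from the restricted orbit digraph $\mathcal{R}(\mA)$, using Proposition~\ref{prop:restricted scc} to locate it and Lemma~\ref{lem:extension} to certify it. Recall that $\mathcal{O}(\mA)=\Cay(\Zn,\orb(d))$ has edges $q\to q\oplus d_s$ with $q\in\Zn$ and $0\le s\le\ell-1$, such an edge being \emph{short} precisely when $q+s<n$, and that $\mathcal{R}(\mA)$ is the spanning subgraph retaining only the short edges.

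The first step is to compute, for $0\le s\le\ell-1$ and $0\le q\le n-1$, the excluded and duplicate sets of the word $w=a^{s+1}b^q$. Since $\excl(a)=\{0\}$ and $a$ permutes $\{1,\dots,n-1\}$, we get $\Zn\dt a^{s+1}=\{1,\dots,n-1\}$; and since $a$ is injective on $\{1,\dots,n-1\}$ while $0$ has no $a$-preimage, the only pair of distinct states merged by $a^{s+1}$ is $\{0,r\}$, which $a^{s+1}$ sends to $d\dt a^{s}=d_s$. As $b^q$ is the bijection $x\mapsto x\oplus q$, it follows that $\excl(w)=\{q\}$, $\dupl(w)=\{q\oplus d_s\}$, and $|w|=s+1+q$.

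By Lemma~\ref{lem:extension}, this $w$ expands $S$ precisely when $q\notin S$ and $q\oplus d_s\in S$, that is, precisely when the orbit-digraph edge $q\to q\oplus d_s$ has its tail outside $S$ and its head inside $S$; and if this edge is short, then $|w|=s+1+q\le n$. So it suffices to produce a short edge entering $S$ from outside, and here Proposition~\ref{prop:restricted scc} does the work. A subset that is not a union of $H_0$-cosets is in particular proper and non-empty (so that expandability is meaningful), and there is an $H_0$-coset $C$ with $\varnothing\ne C\cap S\ne C$. Picking $v\in C\setminus S$ and $u\in C\cap S$ and invoking Proposition~\ref{prop:restricted scc} — which says $C$ is the vertex set of a \scc\ of $\mathcal{R}(\mA)$ — we obtain a path from $v$ to $u$ inside $\mathcal{R}(\mA)$; as it starts outside $S$ and ends in $S$, one of its (necessarily short) edges goes from $\Zn\setminus S$ into $S$. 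Feeding this edge into the first two steps finishes the proof.

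I expect the delicate point to be the length bound rather than the digraph combinatorics: an arbitrary edge of $\mathcal{O}(\mA)$ entering $S$ already yields an expanding word, but a long edge $q\to q\oplus d_s$ produces the word $a^{s+1}b^q$ of length $s+1+q$, which may reach up to $n+\ell-1>n$. Restricting attention to short edges — which Proposition~\ref{prop:restricted scc} guarantees still suffice to link any two states of a common $H_0$-coset — is exactly what keeps the expanding word within length $n$.
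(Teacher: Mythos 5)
Your proposal is correct and follows essentially the same route as the paper's proof: compute $\excl(a^{s+1}b^q)=\{q\}$ and $\dupl(a^{s+1}b^q)=\{q\oplus d_s\}$, use Proposition~\ref{prop:restricted scc} to find a short edge of the orbit digraph crossing from $\Zn\setminus S$ into $S$ inside a coset meeting both $S$ and its complement, and certify expandability via Lemma~\ref{lem:extension} with the length bound $q+s+1\le n$ coming from shortness. The only cosmetic difference is that you determine the duplicate set exactly, whereas the paper merely exhibits the merging pair $\{0,r\}$, which suffices.
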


\begin{proof}
If a non-empty subset $S$ of $\Zn$ is not a union of $H_0$-cosets, then there exists a~coset $C$ which is neither contained in $S$ nor disjoint from $S$. Then there exist a~state $p\in C{\setminus}S$ and a state $p'\in C\cap S$. Proposition~\ref{prop:restricted scc} implies that any two states in $C$ are connected in the restricted orbit digraph $\mathcal{R}(\mA)$; in particular, there is a sequence $p=p_0,p_1,\dots,p_t=p'$ of states in $C$ such that $p_{i-1}\to p_i$ is an edge in $\mathcal{R}(\mA)$ for each $i=1,\dots,t$. If $j$ is the maximal index such that $p_j\in C{\setminus}S$, then $j<t$ and $p_{j+1}\in C\cap S$. Renaming $p_j$ and $p_{j+1}$ to $q$ and $q'$, respectively, we conclude that the edge $q\to q'$ of $\mathcal{R}(\mA)$ is such that $q\notin S$ and $q'\in S$.

Let $\orb(d)=\{d_0,d_1,\dots,d_{\ell-1}\}$. By the construction of the restricted orbit digraph, $q\to q'$ being its edge means that $q\to q'$ is a short edge in the orbit digraph $\mathcal{O}(\mA)$. Unfolding the definitions of  $\mathcal{O}(\mA)$ and of being short, we see that $q'=q\oplus d_s$ for some $s\in\{0,1,\dots,\ell-1\}$ and $q+s<n$. Now consider the word $a^{s+1}b^q$ of length $q+s+1\le n$. Since
\[
\left.\begin{array}{l}
0\dt a^{s+1}b^q\\
r\dt a^{s+1}b^q
\end{array}
\right\}=d\dt a^{s}b^q=d_s\dt b^q=d_s\oplus q=q',
\]
the duplicate set of $a^{s+1}b^q$ contains $q'$. On the other hand, the only excluded state of $a^{s+1}$ is 0 whence $\excl(a^{s+1}b^q)=\{q\}$. Thus, we have $\excl(a^{s+1}b^q)\cap S=\varnothing$ while $\dupl(a^{s+1}b^q)\cap S\ne\varnothing$. Lemma~\ref{lem:extension} then implies that the word $a^{s+1}b^q$ expands $S$. Since the length of this word does not exceed $n$, the subset $S$ is $n$-expandable.
\end{proof}

Now we can easily deduce Theorem~\ref{thm:good orbit}.

\smallskip

\noindent\emph{Proof of Theorem~\ref{thm:good orbit}.}  Let $\mA=\langle\Zn,\{a,b\}\rangle$ be a standardized DFA whose orbit subgroup $H_0$ coincides with $(\Zn,\oplus)$. Then no non-empty proper subset of $\Zn$ can be a union of $H_0$-cosets, whence each non-empty proper subset of $\Zn$  is $n$-expandable by Proposition~\ref{prop:noncosets}. Now Lemma~\ref{lem:n-extension} implies that every subset with $k>0$ states is reachable by a word of length $\le n(n-k)$. Thus, the automaton $\mA$ fulfills Don's conjecture.\qed

\section{Limitations of our approach and its possible generalizations}
\label{sec:limitations}

Analyzing the above proof of Theorem~\ref{thm:good orbit}, we see that a stronger statement has actually been proved: in every standardized DFA $\mA=\langle\Zn,\{a,b\}\rangle$ whose orbit subgroup $H_0$ coincides with $(\Zn,\oplus)$, each subset with $k>0$ states is reachable by a product of $n-k$ words of the form $a^{i+1}b^j$ where $0\le i,j\le n-1$ and $i+j\le n$. Each word of the form $a^{i+1}b^j$ with $0\le i,j\le n-1$ has a unique excluded state (namely, $\excl(a^{i+1}b^j)=\{j\}$). For any word $w$, its \emph{defect} is defined as the cardinality of the set $\excl(w)$. Thus, under the premise of Theorem~\ref{thm:good orbit}, each subset with $k>0$ states is reachable by a product of $n-k$ words of defect 1 and length $\le n$.

For the convenience of the subsequent discussion, call a (not necessarily binary) DFA $\mathrsfs{A}=\langle Q,\Sigma\rangle$ \emph{perfectly reachable} if each subset with $k>0$ states is reachable in $\mA$ by a product of $|Q|-k$ words of defect 1. It turns out that this property can be characterized in terms of a certain digraph $\Gamma_1(\mathrsfs{A})$ associated with $\mA$. The digraph $\Gamma_1(\mathrsfs{A})$ has $Q$ as its vertex set and the edges of $\Gamma_1(\mathrsfs{A})$ are of the form $p\to q$ such that for some word $w\in\Sigma^*$ of defect 1, $\excl(w)=\{p\}$ while $\dupl(w)=\{q\}$; if so happens, we say that the edge $p\to q$ is \emph{forced} by the word $w$. The definition and notation come from~\cite{BondarVolkov16}, but much earlier, though in a less explicit form, the same digraph was used by Igor Rystsov~\cite{rystsov2000estimation} for some special species of DFAs. Therefore, we call $\Gamma_1(\mathrsfs{A})$ the \emph{Rystsov digraph} of $\mA$.

The following characterization of perfectly reachable automata arises as a combination of two results in the literature.

\begin{proposition}\label{prop:perfect}
A DFA is perfectly reachable if and only if its Rystsov digraph is \scn.
\end{proposition}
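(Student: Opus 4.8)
The plan is to prove the two implications separately. The $(\Leftarrow)$ direction is a short iteration on top of Lemma~\ref{lem:extension}; the $(\Rightarrow)$ direction needs a more careful bookkeeping of excluded sets and is where the real work lies. (The statement can also be assembled from the corresponding results of Rystsov~\cite{rystsov2000estimation} and Bondar--Volkov~\cite{BondarVolkov16}, but a self-contained argument seems cleaner to present.)

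For $(\Leftarrow)$ I would assume $\Gamma_1(\mA)$ is strongly connected and fix a non-empty $S\subseteq Q$ with $|S|=k$; if $k=|Q|$ there is nothing to prove, so let $S$ be proper. Choosing $p\notin S$ and $q\in S$ and following a directed path from $p$ to $q$ in $\Gamma_1(\mA)$, one finds an edge $p'\to q'$ with $p'\notin S$ and $q'\in S$, which by definition is forced by a word $w$ of defect~$1$ with $\excl(w)=\{p'\}$ and $\dupl(w)=\{q'\}$. Then $\excl(w)\cap S=\varnothing$ and $\dupl(w)\cap S\ne\varnothing$, so Lemma~\ref{lem:extension} yields a set $P$ with $P\dt w=S$ and $|P|>|S|$; since $w$ has defect~$1$, in fact $|P|=|S|+1$ (the full $w$-preimage of $S$ adds exactly the second preimage of $q'$). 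Iterating this $|Q|-k$ times --- the intermediate sets remain proper and non-empty until the last step, where the size reaches $|Q|$ and the set must be $Q$ --- produces words $w_1,\dots,w_{|Q|-k}$ of defect~$1$ with $Q\dt(w_{|Q|-k}\cdots w_1)=S$, so $\mA$ is perfectly reachable.

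For $(\Rightarrow)$ I would argue the contrapositive. If $\Gamma_1(\mA)$ is not strongly connected, its condensation is a DAG with a sink, giving a proper non-empty subset $C\subsetneq Q$ that no edge of $\Gamma_1(\mA)$ leaves. I claim $Q\setminus C$, which has $|Q|-|C|$ states, is not reachable by a product of $|C|$ words of defect~$1$, contradicting perfect reachability. Suppose $Q\dt(w_1\cdots w_m)=Q\setminus C$ with $m=|C|$ and each $w_j$ of defect~$1$, and put $E_j:=\excl(w_1\cdots w_j)$, so $E_0=\varnothing$ and $E_m=C$. The first key point is a counting argument: a defect-$1$ word decreases the size of any set by at most $1$ (its only non-injectivity is the single duplicated pair), so from $|E_0|=0$, $|E_m|=m$ and exactly $m$ steps we get $|E_j|=|E_{j-1}|+1$ for all $j$. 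Writing $p_j=\excl(w_j)$ and $q_j=\dupl(w_j)$ (so $p_j\to q_j$ is an edge of $\Gamma_1(\mA)$ and $p_j\ne q_j$, as an excluded state cannot be duplicated), the second key point is that analysing the $w_j$-preimages shows this equality forces $E_{j-1}$ to be disjoint from the two-element set $q_jw_j^{-1}$, whence $E_j=\{p_j\}\sqcup(E_{j-1}\dt w_j)$; in particular $q_j\notin E_j$. Taking $j=m$ gives $p_m\in E_m=C$; since $C$ is closed, the edge $p_m\to q_m$ stays inside $C$, so $q_m\in C=E_m$, contradicting $q_m\notin E_m$.

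The main obstacle is this $(\Rightarrow)$ direction, and inside it the interplay of the two facts above: (i) the total defect budget forces the excluded set to grow by exactly one letter-application at a time, and (ii) this rigidity pins down $E_j$ precisely enough --- namely $E_j=\{p_j\}\sqcup(E_{j-1}\dt w_j)$ with $q_j\notin E_j$ --- for the closedness of $C$ to bite. Getting the preimage counting in (i)--(ii) exactly right, in particular handling the unique duplicated pair of each $w_j$ and the places where it meets $E_{j-1}$, is the delicate part; the rest is routine.
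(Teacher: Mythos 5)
Your argument is correct, but it takes a different route from the paper, whose proof of Proposition~\ref{prop:perfect} is not self-contained at all: there, the ``if'' part is delegated to the proof of Theorem~1 in \cite{BondarVolkov16} and the ``only if'' part to Theorem~20 of \cite{GoJu19}. Your $(\Leftarrow)$ direction is in substance the standard expansion argument: pick an edge $p'\to q'$ of $\Gamma_1(\mA)$ crossing from $Q\setminus S$ into $S$, apply Lemma~\ref{lem:extension} to a forcing word, and use that for a defect-$1$ word the full preimage of $S$ has exactly one extra element, so the iteration yields a product of exactly $|Q|-k$ words of defect $1$, as the definition of perfect reachability requires. Your $(\Rightarrow)$ direction --- take a sink \scc{} $C$ of $\Gamma_1(\mA)$, suppose $Q\dt(w_1\cdots w_m)=Q\setminus C$ with $m=|C|$ and each $w_j$ of defect $1$, set $E_j=\excl(w_1\cdots w_j)$, note that a defect-$1$ word enlarges the excluded set by at most one state so that $|E_m|=m$ forces $|E_j|=j$ for all $j$, conclude that both $w_m$-preimages of the state in $\dupl(w_m)$ lie outside $E_{m-1}$, whence that state is not in $E_m=C$ while the state in $\excl(w_m)$ is in $C$, contradicting that no edge of $\Gamma_1(\mA)$ leaves $C$ --- is sound; the auxiliary identity $E_j=\{p_j\}\sqcup(E_{j-1}\dt w_j)$ is correct but more than you need, since $p_m\in E_m$ and $q_m\notin E_m$ already give the contradiction. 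In essence your two halves reconstruct the arguments behind the two citations. What your version buys is a proof readable entirely within the paper's own vocabulary (excluded and duplicate sets, Lemma~\ref{lem:extension}); what the paper's version buys is brevity, at the price of sending the reader to two external sources.
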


\begin{proof}
The proof of~\cite[Theorem~1]{BondarVolkov16} shows that a DFA $\mA$ is perfectly reachable whenever the digraph $\Gamma_1(\mathrsfs{A})$ is \scn. The converse follows from \cite[Theorem 20]{GoJu19}.
\end{proof}

Turning back to binary DFAs, observe that for any standardized DFA $\mA=\langle\Zn,\{a,b\}\rangle$, its orbit digraph is a spanning subgraph of the Rystsov digraph $\Gamma_1(\mathrsfs{A})$. Indeed, the edges of the orbit digraph are of the form $q\to q\oplus d_s$, where $q\in\Zn$ and $d_s\in \orb(d)$. The word $w_{s,q}:=a^{s+1}b^q$ has defect 1 and $\excl(w_{s,q})=q$ while $\dupl(w_{s,q})=q\oplus d_s$. Hence the edge $q\to q\oplus d_s$ occurs in the Rystsov digraph as the edge forced by $w_{s,q}$. By Lemma~\ref{lem:cayley}, DFAs satisfying the premise of Theorem~\ref{thm:good orbit} are precisely standardized DFAs with \scn\ orbit digraphs. Hence, such DFAs are perfectly reachable by Proposition~\ref{prop:perfect}.

Attempting to extend our approach to arbitrary, perfectly reachable standardized DFAs, one may define restricted versions of Rystsov digraphs parallel to restricted orbit digraphs of Section~\ref{sec:orbit digraph}. Namely, the  \emph{restricted Rystsov digraph} of $\mA=\langle\Zn,\{a,b\}\rangle$ is the spanning subgraph of $\Gamma_1(\mathrsfs{A})$ in which one retains only edges forced by words of length at most $n$. In order to transfer the arguments of Section~\ref{sec:orbit digraph} to perfectly reachable standardized DFAs, one needs to establish an analog of Proposition~\ref{prop:restricted scc}, that is, to show that what was connected in $\Gamma_1(\mathrsfs{A})$ remains so in the restricted Rystsov digraph. However, as the following example demonstrates, this is not true in general.

\begin{example}
\label{ex:restricted Rystsov}
Consider the standardized DFA $\mE_{12}=\langle\Z_{12},\{a,b\}\rangle$ shown in Figure~\ref{fig:e12}. Observe that $0\dt a=10=10\dt a$ in $\mE_{12}$ so that for this DFA, both parameters $d$ and $r$ are equal to 10 and the orbit $\orb(d)$ reduces to the singleton $\{10\}$. Therefore, the orbit digraph of  $\mE_{12}$ has two \scn\ components; they have as the vertex sets the subgroup $2\Z_{12}$ of all even residues modulo 12 and its coset $1\oplus2\Z_{12}$ consisting of all odd residues. In contrast, the Rystsov digraph $\ggam{1}{\mE_{12}}$ is strongly connected. This claim can be verified by either brute force successive checking through all words of defect 1 or invoking Propositions 2 and 3 of \cite{CaVo22} that characterize $\ggam{1}{\mE_{12}}$ as the Cayley digraph $\Cay(\Z_{12},D)$ where $D=\{d\dt v \mid v\in\{a,b^ra\}^*\}=\{10\dt v\mid v\in\{a,b^{10}a\}^*\}$. Going either way, one eventually finds the word $(ab^{10})^4a$ of length 45 that has defect~1 and forces the edge $0\to 1$ of $\ggam{1}{\mE_{12}}$. The word $(ab^{10})^4ab$ also has defect 1 and forces the edge $1\to 2$ in $\ggam{1}{\mE_{12}}$. The two edges $0\to 1$ and $1\to 2$ connect the \scn{} components of the orbit digraph of $\mE_{12}$, and thus, ensure strong connectivity of $\ggam{1}{\mE_{12}}$. By Proposition~\ref{prop:perfect} the automaton $\mE_{12}$ is perfectly reachable.

\begin{figure}[t]
\begin{center}
\begin{tikzpicture}
	\foreach \ev in {0,1,2,3,4,5,6,7,8,9,10,11}
	{
		\node[fill=white, circle, draw=blue, scale=1] at ($({30*(-\ev) + 90}:3cm)$) (\ev) {$\ev$};
		
	}	
	
	\draw
		(0) edge[-latex, dashed] (1)
		(1) edge[-latex, dashed] (2)
		(2) edge[-latex, dashed] (3)
		(3) edge[-latex, dashed] (4)
		(4) edge[-latex, dashed] (5)
		(5) edge[-latex, dashed] (6)
		(6) edge[-latex, dashed] (7)
		(7) edge[-latex, dashed] (8)
		(8) edge[-latex, dashed] (9)
		(9) edge[-latex, dashed] (10)
		(10) edge[-latex, dashed] (11)
		(11) edge[-latex, dashed] (0)
		
		(0) edge[-latex, bend right=60] (10)
		(10) edge[-latex, loop, out = 0, in = -60, distance = 1cm] (10)
        (11) edge[-latex, loop, out = -30, in = -90, distance = 1cm] (11)
        (3) edge[-latex, loop, out = 210, in = 150, distance = 1cm] (3)		
        (4) edge[-latex, loop, out = 180, in = 120, distance = 1cm] (4)		
        (5) edge[-latex, loop, out = 150, in = 90, distance = 1cm] (5)		
        (6) edge[-latex, loop, out = 120, in = 60, distance = 1cm] (6)		
        (7) edge[-latex, loop, out = 90, in = 30, distance = 1cm] (7)		
        (8) edge[-latex, loop, out = 60, in = 0, distance = 1cm] (8)		
        (9) edge[-latex, loop, out = 30, in = -30, distance = 1cm] (9)		
		(2) edge[-latex, bend right ] (1)
		(1) edge[-latex, bend right ] (2)
		
	;
\end{tikzpicture}
\end{center}
\caption{The DFA $\mE_{12}=\langle\Z_{12},\{a,b\}\rangle$. Solid and dashed edges show the action of $a$ and, resp., $b$.}\label{fig:e12}
\end{figure}
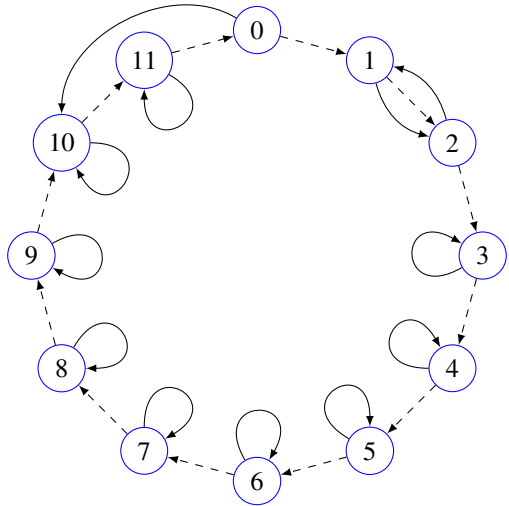

Computing all words of defect 1 and length at most 12, one gets the restricted Rystsov digraph of $\mE_{12}$ shown in Figure~\ref{fig:restrictedGamma e12}. This graph is not \scn.\qed
\end{example}

\begin{figure}[ht]
\begin{center}
\begin{tikzpicture}
	\foreach \ev in {0,2,4,6,8,10}
	{
		\node[fill=white, circle, draw=blue, scale=1] at ($({30*\ev +180}:2cm) + (-3,0)$) (\ev) {$\ev$};
	}
	
	\foreach \od in {1,3,5,7,9,11}
	{
		\node[fill=white, circle, draw=blue, scale=1] at ($({30*(\od -1) +180}:2cm) + (+3,0)$) (\od) {$\od$};
	}
	
	\draw
	(0) edge[-latex, left] node{$a$} (10)
	(2) edge[-latex, left] node{$ab^{2}$} (0)
	(4) edge[-latex, below] node{$ab^{4}$} (2)
	(6) edge[-latex, right] node{$ab^{6}$} (4)
	(8) edge[-latex, right] node{$ab^{8}$} (6)
	(10) edge[-latex, above] node{$ab^{10}$} (8)
	
	(1) edge[-latex,left] node{$ab$} (11)
	(3) edge[-latex,left] node{$ab^{3}$} (1)
	(5) edge[-latex, below] node{$ab^{5}$} (3)
	(7) edge[-latex, right] node{$ab^{7}$} (5)
	(9) edge[-latex, right] node{$ab^{9}$} (7)
	(11) edge[-latex, above] node{$ab^{11}$} (9)
	
	(0) edge[-latex, above, sloped] node{$ab^{10}a$} (8)
	;
\end{tikzpicture}
\end{center}
\caption{The restricted Rystsov digraph of the DFA $\mE_{12}$. Each edge is labeled by the shortest word of defect 1 forcing it.}\label{fig:restrictedGamma e12}
\end{figure}
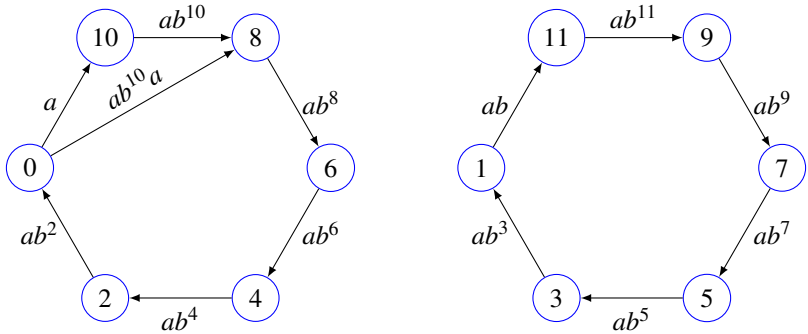

Thus, the method we used to prove Theorem~\ref{thm:good orbit} cannot be directly extended to show that Don's conjecture holds for perfectly reachable standardized DFAs. Of course, this does not disprove the conjecture. In particular, the automaton $\mE_{12}$ is not a counterexample to Don's conjecture. By Lemma~\ref{lem:n-extension}, to justify the latter claim, it suffices to show that every proper non-empty subset of $\Z_{12}$ is 12-expandable in $\mE_{12}$. Proposition~\ref{prop:noncosets} ensures this for all subsets except for $2\Z_{12}$ and $1\oplus2\Z_{12}$, and one easily verifies that the word $aba$ expands $1\oplus2\Z_{12}$ while the word $(ab)^2$ does the job for $2\Z_{12}$. (The words $aba$ and $(ab)^2$ have defect 2, and therefore, they do not show up when the restricted Rystsov digraph is constructed.)

Example~\ref{ex:restricted Rystsov} shows that expanding a subset with a short word can become possible if using words of defect greater than 1 is allowed. It is natural to ask whether this trick solves the issue for all perfectly reachable standardized DFAs. If so, then every proper non-empty subset in each $n$-state perfectly reachable standardized DFA would be $n$-expandable and Don's conjecture for such DFAs would follow by Lemma~\ref{lem:n-extension}. However, our next example exhibits a 21-state perfectly reachable standardized DFA with a subset that fails to be 21-expandable.

\begin{figure}[ht]
\begin{center}
\begin{tikzpicture}
	\node[fill=white, circle, draw=blue, draw, scale = 1] (0) at ({2*cos(18+10*360/10)},{2*sin(18+10*360/10)}) {$0$};
	\foreach \k [evaluate = \k as \e using int(2*\k)][evaluate = \k as \o using int(2*\k -1)] \k in {1,...,10}{
		
		\node[fill=white, circle, draw=blue, scale = 1] (\e) at ({4*cos(18+\k*360/10)},{4*sin(18+\k*360/10)}) {$\e$};
		\node[fill=white, circle, draw=blue, scale = 1] ({\o}) at ({3*cos(\k*360/10)},{3*sin(\k*360/10)}) {$\o$};
		
	}
	\foreach \k [evaluate =\k as \t using int(\k+1)] \k in{0,...,19}{
		\draw (\k) edge[-latex, dashed] (\t);
	}
		\draw
			(20) edge[-latex, dashed] (0)
			(0) edge[-latex, bend right = 10] (14)
			(7) edge[-latex, bend  right = 7] (18)
			(18) edge[-latex, bend  right= 7] (7)
			;
\end{tikzpicture}
\end{center}
\caption{The DFA $\mE_{21}=\langle\Z_{21},\{a,b\}\rangle$. Solid and dashed edges show the action of $a$ and, resp., $b$; if $a$ fixes a state, the corresponding loop is omitted to improve readability.}\label{fig:e21}
\end{figure}
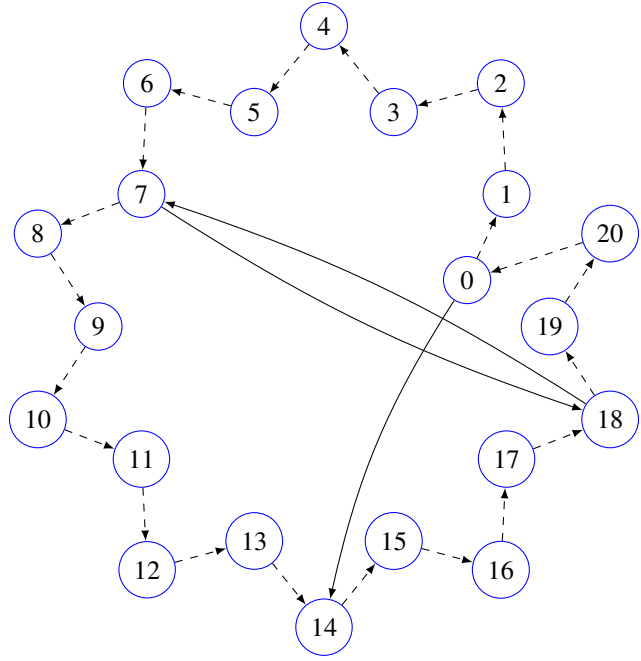

\begin{example}
\label{ex:nonexpandability}
Consider the standardized DFA $\mE_{21}=\langle\Z_{21},\{a,b\}\rangle$ shown in Figure~\ref{fig:e21} where all loops labeled $a$ are omitted to lighten the picture. Observe that $0\dt a=14=14\dt a$ in $\mE_{21}$. Thus, for $\mE_{21}$, both parameters $d$ and $r$ are equal to 14 and the orbit $\orb(d)$ reduces to the singleton $\{14\}$. The orbit digraph of $\mE_{21}$ has seven \scn\ components whose vertex sets are the subgroup $7\Z_{21}=\{0,7,14\}$ and its cosets. The word $ab^{14}a$ has defect 1, $\excl(ab^{14}a)=\{0\}$ and $\dupl(ab^{14}a)=\{18\}$. Hence, the Rystsov digraph $\ggam{1}{\mE_{21}}$ has the edge $0\to 18$. Multiplying $ab^{14}a$ on the right by $b,b^2,b^3,b^4,b^4,b^6$, we get words of defect 1 that force the edges
\[
1\to 19,\ 2\to 20,\ 3\to 0,\ 4\to 1,\ 5\to 2,\ 6\to 3
\]
in $\ggam{1}{\mE_{21}}$. These edges, together with $0\to 18$, cyclically connect all \scn{} components of the orbit digraph. Hence, the digraph $\ggam{1}{\mE_{21}}$ is strongly connected. By Proposition~\ref{prop:perfect} the automaton $\mE_{21}$ is perfectly reachable.

We have verified by brute force examination of all words in $\{a,b\}^*$ up to length 21 that none of them expand the subset $3\oplus7\Z_{21}=\{3,10,17\}$. The shortest word that expands $\{3,10,17\}$ is the word $ab^{14}ab^6$ of length 22.\qed
\end{example}

Although the automaton $\mE_{21}$ possesses a subset that is not 21-expandable, we have verified that it is not a counterexample to Don's conjecture. The only `bad' subset $\{3,10,17\}$ turns out to be the image of the word
\begin{equation}
\label{eq:long word}
(ab^{15}ab^3ab^4)^2 ab^4(ab^3(ab^4)^2)^2 ab^3 ab^4 ab^7 (ab^4)^2 ab^{14}ab^6.
\end{equation}
The length of the word \eqref{eq:long word} is 132, which is much less than the bound $21(21-3)=378$ claimed by Don's conjecture. The word \eqref{eq:long word} can be decomposed into a product of $18=21-3$ words of defect 1, but only the rightmost factor of defect 1 has length greater than 21 while all other factors are shorter, which by far compensates the excess of the last factor.

\section{Conclusion}
\label{sec:conclusion}

Using the concept of expandability, we have confirmed Don's conjecture for standardized DFAs subject to an arithmetical restriction to their orbits. Moreover, we have proved that in every standardized DFA $\langle\Zn,\{a,b\}\rangle$, almost all subsets are $n$-expandable; the only possible exceptions are the unions of cosets of the orbit subgroup of the DFA so that if the subgroup has index $k$ in $(\Zn,\oplus)$, then at least $2^n-2^k$ subsets of $\Zn$ are $n$-expandable. On the other hand, we found an example of a 21-state perfectly reachable standardized DFA with a subset that is not 21-expandable.

To our surprise, our results reveal that the situation around the expandability approach to Don's conjecture for perfectly reachable automata is in parallel with that around the extensibility approach to \v{C}ern\'{y}'s  conjecture for \sa. To discuss this similarity, recall a few definitions related to \sa.

A DFA $\mathrsfs{A}=\langle Q,\Sigma\rangle$ is \emph{synchronizing} if it has a reachable singleton, that is, $Q\dt w$ is a singleton for some word $w\in\Sigma^*$. Any such $w$ is said to be a \emph{reset word} for the DFA. We refer the reader to the chapter~\cite{KV21} of the `Handbook of Automata Theory' or to the second-named author's survey \cite{Volkov:2022} for an introduction to the rich theory of \sa. Here we only say that much research in this area groups around the famous \emph{\v{C}ern\'{y} conjecture} that every \san{} with $n$ states admits a reset word of length at most $(n-1)^2$. The conjecture remains open for almost 60 years though it has been confirmed for some species of DFAs. A method that has proved to be efficient for proving the \v{C}ern\'{y} conjecture for special classes of automata is based on the following notion. For a DFA $\mathrsfs{A}=\langle Q,\Sigma\rangle$, a subset $S\subseteq Q$ and a word $w\in\Sigma^*$, denote by $Sw^{-1}$ the full preimage of $S$ under the action of $w$, that is, $Sw^{-1}:=\{q\in Q\mid q\dt w\in S\}$. A word $w\in\Sigma^*$ is said to \emph{extend} a proper non-empty subset $S\subset Q$ if $|Sw^{-1}|>|S|$. Assuming that $|Q|=n$, we say that a proper non-empty subset $S\subset Q$ is $n$-\emph{extensible} if $S$ can be extended by a word of length at most $n$. It is well known (and easy to see) that each $n$-state DFA, all of whose proper non-empty subsets are $n$-extensible, is synchronizing and has a reset word of length at most $(n-1)^2$; that is, it fulfills the \v{C}ern\'{y} conjecture. The approach to the \v{C}ern\'{y} conjecture via $n$-extensibility traces back to Jean-\'Eric Pin's paper~\cite{Pin:1978}; the most striking applications of this approach are the proofs of the \v{C}ern\'{y} conjecture for circular \sa\ (Lois Dubuc~\cite{Dubuc:1998}) and \sa\ with Eulerian underlying digraphs (Jarkko Kari~\cite{Kari:2003}). On the other hand, Kari~\cite{Kari:2001} found an example of a 6-state \san\ with a subset that is not 6-extensible. This shows that, in general,  the \v{C}ern\'{y} conjecture cannot be proved via $n$-extensibility.

Clearly, every \cran\ is synchronizing (and this observation has served as one of the motivations for considering \cra). The converse is not true: it is easy to exhibit synchronizing, but not \cra{}, even in the class of standardized DFAs. As a concrete instance, consider the standardized DFA shown in Figure~\ref{fig:sync no cr}; it has $ab^4ab(ab^2)^2aba$ as a reset word, while the subset $\{0,1,3,4\}$ is not reachable.
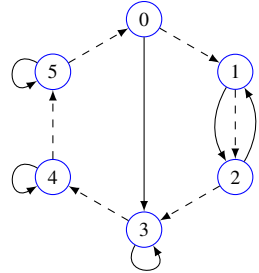
\begin{figure}[h]
\begin{center}
\begin{tikzpicture}
[scale=0.7]
	\foreach \i in {0,1,2,3,4,5}
	{
		\node[fill=white, circle, draw=blue, scale=0.75] at ($({90-60*\i }:2cm) + (-3,0)$) (\i) {$\i$};
	}
	
	\draw
		(0) edge[-latex, dashed]  (1)
		(1) edge[-latex, dashed]  (2)
		(2) edge[-latex, dashed]  (3)
		(3) edge[-latex, dashed]  (4)
		(4) edge[-latex, dashed]  (5)
		(5) edge[-latex, dashed]  (0)
		
		(0) edge[-latex]  (3)
		(3) edge[-latex, loop, out =240 , in =300, distance = 0.7cm ]  (3)
        (5) edge[-latex, loop, out =150 , in =210, distance = 0.7cm ]  (5)
        (4) edge[-latex, loop, out =150 , in =210, distance = 0.7cm ]  (4)
		(1) edge[-latex, bend right]  (2)
		(2) edge[-latex, bend right]  (1)
	;
\end{tikzpicture}
\end{center}
\caption{A standardized synchronizing DFA that is not completely reachable.  Solid and dashed edges show the action of $a$ and, resp., $b$.}\label{fig:sync no cr}
\end{figure}

Comparing the concepts of expanding and extending words, it is easy to see that if in a DFA $\mathrsfs{A}=\langle Q,\Sigma\rangle$, a word $w\in\Sigma^*$ expands a subset $S\subset Q$, then $w$ extends $S$ as well. The converse is not valid, even for standardized DFAs. As an instance, we can reuse the DFA in Figure~\ref{fig:sync no cr} where the word $ab^4ab(ab^2)^2aba$ extends the subset $\{0,1,3,4\}$ but does not expand it. Example~\ref{ex:nonexpandability} shows that even a standardized perfectly reachable DFA $\langle\Zn,\{a,b\}\rangle$ may have a subset that is not $n$-expandable, although all of its proper non-empty subsets are $n$-extensible by~\cite[Proposition 4.6]{Dubuc:1998}.

Thus, the notions of a perfectly reachable automaton and expandability are specializations of those of a \san{} and extensibility, respectively. Nevertheless, these specialized concepts behave similarly, to some extent, to how their more general counterparts do. One can, therefore, speculate that further advances in studying perfectly reachable automata and Don's conjecture may contribute to a better understanding of \sa{} and progress towards resolving the \v{C}ern\'{y} conjecture.

\medskip

\noindent\textbf{Added in revision.} Very recently, Yinfeng Zhu~\cite{Zhu:2024} has found an infinite series of binary completely reachable automata that violate Don's conjecture. Zhu's automata are not standardized; the validity of the conjecture for standardized completely reachable automata so far remains open. In~\cite{Zhu:2024}, it is also shown that every standardized DFA $\langle\Zn,\{a,b\}\rangle$ whose orbit subgroup coincides with the subgroup $2\Zn$ of the group $(\Zn,\oplus)$ fulfills Don's conjecture; this strengthens Theorem~\ref{thm:good orbit} of the present paper and is based on an enhancement of its techniques.

\medskip

\noindent\textbf{Acknowledgement.} We thank the anonymous reviewers for their careful reading of our paper and their many useful comments and suggestions.

\end{document}